\documentclass[11pt,a4paper]{article}

\usepackage{jheppub}
\usepackage{amsthm,mathtools,bm}
\usepackage{booktabs,array}
\usepackage{microtype}

\newtheorem{theorem}{Theorem}[section]

\theoremstyle{remark}

\newcommand{\R}{\mathbb R}
\newcommand{\avg}[1]{\left\langle #1\right\rangle_t}

\newcommand{\Id}{\mathbf 1}
\newcommand{\p}{\partial}
\newcommand{\D}{\mathcal D}

\title{An exact computer-assisted lower bound for Fisher-information convexity along the two-dimensional heat flow}

\author{Fayez Abu-Ajamieh$^1$,}
\affiliation{$^1$Formerly: The Indian Institute of Science, Bangalore, India}
\emailAdd{fayezabuajamieh@gmail.com}

\author{Xiaoyan Huo$^2$,}
\affiliation{$^2$Universiti Teknologi Malaysia, Malaysia}
\emailAdd{huoxiaoyanphd@163.com}

\abstract{
We study the sharp constant in the differential inequality
\[
 I(t)I''(t)\geq c\,[I'(t)]^2
\]
for Fisher information along the probability-normalized heat flow on $\mathbb R^2$. The unrestricted log-convexity value $c=1$ is false in two dimensions, whereas the known square-root convexity theorem gives the universal value $c=1/2$. We construct an exact, machine-verifiable sum-of-squares certificate proving the stronger universal bound
\[
 c\geq \frac{231691}{250000}=0.926764.
\]
Together with an independent exact Fourier/Gaussian-envelope construction, this gives
\[
0.926764\leq c_*<0.94176665012062,
\]
where $c_*$ is the sharp constant for positive-time Gaussian convolutions of arbitrary probability measures on $\mathbb R^2$. We formulate the proof in explicit two-dimensional index notation. The lower bound follows from an exact rational polynomial identity involving a strictly positive denominator, a weighted total derivative, and eleven positive rational Gram matrices. All $491$ recorded leading principal minors are strictly positive and the complete polynomial identity has zero residual. No symmetry, log-concavity, finite-variance, or finite-entropy assumption is imposed on the initial measure. The sharp value remains open.
}

\begin{document}
\maketitle

\section{Introduction}

Let $X$ be an arbitrary random vector in $\R^2$ with probability law $\mu$, and let $Z\sim N(0,\Id)$ be an independent standard Gaussian. At positive time $t$ define
\begin{equation}
 Y_t=X+\sqrt t\,Z.
 \label{eq:Yt}
\end{equation}
The probability density of $Y_t$ is the Gaussian convolution
\begin{equation}
 f_t(x)=(\mu*\gamma_t)(x),\qquad
 \gamma_t(x)=\frac{1}{2\pi t}\exp\!\left(-\frac{|x|^2}{2t}\right).
 \label{eq:convolution}
\end{equation}
Because the Gaussian kernel itself satisfies
\[
 \p_t\gamma_t=\frac12\nabla^2\gamma_t,
\]
the convolution $f_t$ obeys the probability-normalized heat equation
\begin{equation}
 \boxed{\p_t f_t=\frac12\nabla^2 f_t.}
 \label{eq:heat}
\end{equation}
At every $t>0$, $f_t$ is smooth and strictly positive even when the initial measure $\mu$ is singular.

Define
\begin{equation}
 u(x,t)\equiv \log f_t(x).
 \label{eq:u}
\end{equation}
The Fisher information can then be written as
\begin{equation}
 I(t)=\int_{\R^2}d^2x\,f_t\,\p_i u\,\p_i u
 =\int_{\R^2}d^2x\,\frac{\p_i f_t\,\p_i f_t}{f_t}.
 \label{eq:Fisher}
\end{equation}
Repeated spatial indices $i,j,k,\ldots=1,2$ are summed. The question is how strongly the curvature of $I(t)$ is constrained along the heat flow. The strongest possible inequality of the form
\begin{equation}
 I I''\geq c(I')^2
 \label{eq:mainineq}
\end{equation}
is $c=1$, which is equivalent to log-convexity of $I$. Ledoux, Nair, and Wang proved the corresponding one-dimensional result \cite{LedouxNairWang}. In two dimensions, full log-convexity has been disproved by an explicit hexagonal construction \cite{ZouFanGaoWang}. Liu and Gao proved square-root convexity, corresponding to $c=1/2$ \cite{LiuGao}.

We define the sharp two-dimensional heat-flow constant by
\begin{equation}
 c_*\equiv\inf_{\mu,\,t>0}\frac{I(t)I''(t)}{[I'(t)]^2}.
 \label{eq:cstar}
\end{equation}
The recent counterexample work gives the qualitative interval $1/2\leq c_*<1$ \cite{ZouFanGaoWang}. Our main result is the much narrower rigorous bracket
\begin{equation}
 \boxed{\frac{231691}{250000}\leq c_*<0.94176665012062.}
 \label{eq:maininterval}
\end{equation}
The lower endpoint is not a floating-point claim. Numerical semidefinite programming is used only to discover a candidate sum-of-squares representation; the final certificate is reconstructed over the rationals and independently verified exactly.

\paragraph{Scope.}
The lower theorem is stated for the positive-time heat-flow class $f_t=\mu*\gamma_t$. We do not identify this class with a broader static density class under weaker integrability assumptions. The upper example is a smooth, strictly positive, Gaussian-decaying density and is transferred to positive heat time by continuity.

\section{Heat-flow identities in explicit index notation}
\label{sec:identities}

We use angle brackets for expectation with respect to $f_t$,
\begin{equation}
 \avg{\mathcal O}\equiv\int_{\R^2}d^2x\,f_t(x)\,\mathcal O(x).
 \label{eq:avg}
\end{equation}
Define the local derivative invariants
\begin{equation}
 r\equiv u_i u_i,
 \qquad
 q\equiv u_{ij}u_{ij},
 \qquad
 h\equiv u_{ii}=\nabla^2u,
 \label{eq:rqh}
\end{equation}
and
\begin{equation}
 d_3\equiv u_{ijk}u_{ijk}-2u_{ij}u_{jk}u_{ki}.
 \label{eq:d3}
\end{equation}
The subscript on $d_3$ refers to the appearance of third derivatives and is unrelated to the spatial dimension.

The three basic identities needed below are
\begin{equation}
 \boxed{I=\avg r,\qquad I'=-\avg q,\qquad I''=\avg{d_3}.}
 \label{eq:threeidentities}
\end{equation}
For completeness we give the derivation in the notation used throughout the paper.

Since $u=\log f_t$,
\begin{equation}
 \p_i f_t=f_t u_i.
 \label{eq:df}
\end{equation}
Thus, for any sufficiently regular scalar $\Phi$,
\begin{equation}
 \avg{\p_i\Phi}=-\avg{u_i\Phi}.
 \label{eq:weightedIBP}
\end{equation}
Equation \eqref{eq:Fisher} immediately gives $I=\avg r$. Integrating once by parts also gives
\begin{equation}
 I=-\avg h.
 \label{eq:IminusH}
\end{equation}

Next, using \eqref{eq:heat} and $f_t=e^u$,
\begin{equation}
 \p_tu=\frac{\p_t f_t}{f_t}
 =\frac12\frac{\nabla^2f_t}{f_t}
 =\frac12(h+r).
 \label{eq:ulogheat}
\end{equation}
Differentiating \eqref{eq:IminusH}, applying \eqref{eq:weightedIBP}, and using
\begin{equation}
 \nabla^2r=2q+2u_i\p_i h,
 \label{eq:lapr}
\end{equation}
one finds
\begin{equation}
 I'=-\frac12\avg{h^2+hr+2q+u_i\p_i h}.
 \label{eq:Iprimeintermediate}
\end{equation}
The integral of the total derivative $\p_i(f_t h u_i)$ vanishes, implying
\begin{equation}
 \avg{u_i\p_i h}=-\avg{h^2+hr}.
 \label{eq:mixid1}
\end{equation}
Substitution into \eqref{eq:Iprimeintermediate} therefore yields
\begin{equation}
 \boxed{I'=-\avg q.}
 \label{eq:Iprime}
\end{equation}

Differentiating once more,
\begin{equation}
 I''=-\avg{\p_tq+q\,\p_tu}.
 \label{eq:Isecondstart}
\end{equation}
Since $q=u_{ij}u_{ij}$,
\begin{equation}
 \p_tq=u_{ij}h_{ij}+2u_{ij}u_{jk}u_{ki}+2u_k u_{ij}u_{ijk}.
 \label{eq:qt}
\end{equation}
The first term is integrated by parts using $h_{ij}=\p_k u_{ijk}$,
\begin{equation}
 \avg{u_{ij}h_{ij}}
 =-\avg{u_k u_{ij}u_{ijk}}-\avg{u_{ijk}u_{ijk}}.
 \label{eq:hijIBP}
\end{equation}
Furthermore $\p_k q=2u_{ij}u_{ijk}$, and the vanishing integral of $\p_k(f_t q u_k)$ gives
\begin{equation}
 \avg{u_k u_{ij}u_{ijk}}+\frac12\avg{q(h+r)}=0.
 \label{eq:mixid2}
\end{equation}
All mixed terms then cancel, leaving
\begin{equation}
 \boxed{I''=\avg{u_{ijk}u_{ijk}-2u_{ij}u_{jk}u_{ki}}=\avg{d_3}.}
 \label{eq:Isecond}
\end{equation}

It is convenient to introduce
\begin{equation}
 Q\equiv\avg q=-I',\qquad D\equiv\avg{d_3}=I''.
 \label{eq:QD}
\end{equation}
Then the target inequality is simply
\begin{equation}
 ID\geq cQ^2.
 \label{eq:IDQ}
\end{equation}

\section{The exact lower-bound certificate}
\label{sec:certificate}

\subsection{Why an auxiliary parameter is useful}

The product structure in \eqref{eq:IDQ} is inconvenient for a local polynomial sum-of-squares argument, because $I$, $Q$, and $D$ are separate spatial averages. We therefore introduce a positive auxiliary parameter $\tau$ and consider
\begin{equation}
 F(\tau)\equiv D-2c\tau Q+c\tau^2 I
 =\avg{d_3-2c\tau q+c\tau^2r}.
 \label{eq:Ftau}
\end{equation}
If $F(\tau)\geq0$ for all $\tau>0$, then the quadratic is minimized at
\begin{equation}
 \tau_* = \frac QI,
 \label{eq:taustar}
\end{equation}
and hence
\begin{equation}
 F(\tau_*)=D-c\frac{Q^2}{I}\geq0
 \quad\Longrightarrow\quad
 ID\geq cQ^2.
 \label{eq:tauimplies}
\end{equation}
The role of $\tau$ is therefore to convert a nonlinear product of global averages into the expectation value of one local expression.

\subsection{A strictly positive denominator}

Set
\begin{equation}
 B\equiv r+\tau
 \label{eq:Bdef}
\end{equation}
and define
\begin{equation}
 H=B^2+q+\frac14Bh+\frac32u_i u_{ij}u_j.
 \label{eq:H}
\end{equation}
The last two terms are not manifestly positive, so we now show explicitly that $H$ is strictly positive.

Decompose the Hessian and the rank-one tensor $u_i u_j$ into trace and traceless pieces,
\begin{equation}
 u_{ij}=A^{(0)}_{ij}+\frac h2\delta_{ij},
 \qquad
 A^{(0)}_{ii}=0,
 \label{eq:A0}
\end{equation}
\begin{equation}
 u_i u_j=X_{ij}+\frac r2\delta_{ij},
 \qquad
 X_{ii}=0.
 \label{eq:Xdef}
\end{equation}
In two dimensions one finds
\begin{equation}
 A^{(0)}_{ij}A^{(0)}_{ij}=q-\frac{h^2}{2},
 \qquad
 X_{ij}X_{ij}=\frac{r^2}{2},
 \label{eq:A0Xnorms}
\end{equation}
and
\begin{equation}
 A^{(0)}_{ij}X_{ij}=u_i u_{ij}u_j-\frac{hr}{2}.
 \label{eq:A0Xcontract}
\end{equation}
Using these relations and completing squares gives the exact decomposition
\begin{equation}
\begin{aligned}
 H={}&
 \left(A^{(0)}_{ij}+\frac34X_{ij}\right)
 \left(A^{(0)}_{ij}+\frac34X_{ij}\right)
 +\frac12\left(h+r+\frac\tau4\right)^2
 \\
 &+\frac7{32}r^2+\frac74r\tau+\frac{31}{32}\tau^2.
\end{aligned}
 \label{eq:Hsquare}
\end{equation}
Every term on the right is nonnegative for $\tau>0$. Therefore
\begin{equation}
 \boxed{H\geq\frac{31}{32}\tau^2>0.}
 \label{eq:Hpositive}
\end{equation}
This is crucial because the certificate below contains $H^{-2}$ and $H^{-3}$.

\subsection{The local polynomial identity}

Let
\begin{equation}
 c_0=\frac{231691}{250000}.
 \label{eq:c0}
\end{equation}
For any vector field $V_i(x)$ define the weighted derivative
\begin{equation}
 \D_i V_i\equiv \p_iV_i+u_iV_i.
 \label{eq:Dweighted}
\end{equation}
Because $\p_i f_t=f_tu_i$,
\begin{equation}
 f_t\D_iV_i=\p_i(f_tV_i),
 \label{eq:weightedtotal}
\end{equation}
so $\D_iV_i$ is a total derivative with respect to the weighted measure $f_t d^2x$.

The exact certificate consists of a polynomial vector $J_i$, two sum-of-squares polynomials $S_0,S_1$, and the positive polynomial $H$ such that
\begin{equation}
 \boxed{
 H^3\left(d_3-2c_0\tau q+c_0\tau^2r\right)
 =S_0+\tau S_1+H\D_iJ_i-2J_i\p_iH.}
 \label{eq:certificate}
\end{equation}
Using the product rule,
\begin{equation}
 \D_i\!\left(\frac{J_i}{H^2}\right)
 =\frac{H\D_iJ_i-2J_i\p_iH}{H^3},
 \label{eq:weightedquotient}
\end{equation}
so after dividing by $H^3$,
\begin{equation}
 d_3-2c_0\tau q+c_0\tau^2r
 =\frac{S_0+\tau S_1}{H^3}
 +\D_i\!\left(\frac{J_i}{H^2}\right).
 \label{eq:divided}
\end{equation}
The first term on the right is pointwise nonnegative. The second will vanish after spatial integration.

\subsection{What the computer verifies}

At a fixed point, the independent derivative variables are
\begin{equation}
 u_1,u_2;
 \quad u_{11},u_{12},u_{22};
 \quad u_{111},u_{112},u_{122},u_{222};
 \quad u_{1111},u_{1112},u_{1122},u_{1222},u_{2222},
 \label{eq:jetvars}
\end{equation}
together with $\tau$. Thus \eqref{eq:certificate} is an exact polynomial identity in $15$ variables. Expanding both sides and matching every monomial gives $925$ exact linear coefficient equations. The fourth derivatives arise because $\p_iJ_i$ and $\p_iH$ differentiate the lower-order local variables; since the left-hand side contains no fourth derivatives, their exact cancellation is a stringent consistency check.

The nonnegative polynomials are represented by Gram matrices,
\begin{equation}
 S_0=\sum_{a=1}^{5}v_a^TM_av_a,
 \qquad
 S_1=\sum_{a=1}^{6}w_a^TN_aw_a.
 \label{eq:GramSOS}
\end{equation}
If a real symmetric matrix $M$ satisfies $M\succeq0$, then $v^TMv\geq0$ for every real vector $v$. The eleven rational Gram blocks have dimensions
\begin{equation}
 106,85,54,26,9,47,26,64,44,22,8.
 \label{eq:blocksizes}
\end{equation}
The exact verifier checks all $491$ recorded leading principal minors and finds them positive.

\subsection{Why the weighted total derivative vanishes}

The formal identity \eqref{eq:weightedtotal} is not by itself enough on the noncompact space $\R^2$; the boundary term must be controlled. For
\begin{equation}
 Y=X+\sqrt t\,Z,
 \end{equation}
Gaussian differentiation gives, for every multi-index $\alpha$,
\begin{equation}
 \frac{\p^\alpha f_t(Y)}{f_t(Y)}
 =(-1)^{|\alpha|}t^{-|\alpha|/2}\,
 \mathbb E[H_\alpha(Z)\mid Y],
 \label{eq:Hermite}
\end{equation}
where $H_\alpha$ is the corresponding product Hermite polynomial. Since a Gaussian has finite moments of every order, conditional Jensen implies finite moments for all normalized derivatives $\p^\alpha f_t/f_t$. Derivatives of $u=\log f_t$ are polynomial combinations of these normalized derivatives, so the weighted polynomial moments entering $J_i$ are finite.

Together with \eqref{eq:Hpositive}, this gives the required integrability of $J_i/H^2$. Let $\chi_R$ be a smooth cutoff equal to one for $|x|\leq R$, zero for $|x|\geq2R$, and satisfying $|\p_i\chi_R|\leq C/R$. Then
\begin{equation}
 \int d^2x\,\chi_R f_t\D_i\!\left(\frac{J_i}{H^2}\right)
 =-\int d^2x\,f_t\frac{J_i}{H^2}\p_i\chi_R
 \longrightarrow0
 \label{eq:cutoff}
\end{equation}
as $R\to\infty$. Hence
\begin{equation}
 \boxed{\avg{\D_i(J_i/H^2)}=0.}
 \label{eq:divzero}
\end{equation}
Taking the expectation of \eqref{eq:divided} gives
\begin{equation}
 D-2c_0\tau Q+c_0\tau^2I\geq0.
 \label{eq:tauineq}
\end{equation}
Choosing $\tau=Q/I$ therefore proves the main theorem.

\begin{theorem}[Certified lower bound]
For every probability measure $\mu$ on $\R^2$ and every $t>0$, with $f_t=\mu*\gamma_t$,
\begin{equation}
 \boxed{I(t)I''(t)\geq\frac{231691}{250000}[I'(t)]^2.}
 \label{eq:lowerthm}
\end{equation}
No symmetry, log-concavity, finite-variance, or finite-entropy assumption on $\mu$ is required.
\end{theorem}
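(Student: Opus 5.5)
The proof assembles the ingredients of Sections~\ref{sec:identities} and~\ref{sec:certificate} in order.

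\emph{Step 1: reduce to averages.} Fix $\mu$ and $t>0$. By \eqref{eq:threeidentities}, $I=\avg r$, $I'=-Q$ and $I''=D$, so the claim is exactly \eqref{eq:IDQ} with $c=c_0$. If $I=0$, then $r=0$ a.e., so $f_t$ is constant, which is impossible for a probability density on $\R^2$. Hence $I>0$, and since $q\ge 0$ we have $Q\ge0$. If $Q=0$, the claim reduces to $ID\ge0$. This follows by letting $\tau\downarrow0$ in \eqref{eq:tauineq} below, or by noting directly that $F(\tau)\ge0$ for every small $\tau>0$ forces $D\ge0$. Otherwise set $\tau=Q/I>0$.

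\emph{Step 2: verify the certificate identity \eqref{eq:certificate}.} This is an exact polynomial identity in the jet variables \eqref{eq:jetvars} and $\tau$. I would check it by exact rational expansion, confirming that all $925$ coefficient equations hold. For positivity I would check that each Gram block $M_a,N_a$ is positive definite, using the $491$ leading principal minors (Sylvester's criterion). This gives $S_0,S_1\ge0$ pointwise, and since $\tau>0$ also $\tau S_1\ge0$. Together with \eqref{eq:Hsquare}, which gives $H\ge\frac{31}{32}\tau^2>0$, dividing by $H^3$ yields the pointwise decomposition \eqref{eq:divided} via the quotient rule \eqref{eq:weightedquotient}.

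\emph{Step 3: integrate.} Taking $\avg{\cdot}$ of \eqref{eq:divided} and using \eqref{eq:divzero} gives $F(\tau)=D-2c_0\tau Q+c_0\tau^2 I\ge\avg{(S_0+\tau S_1)/H^3}\ge0$. Substituting $\tau=Q/I$ as in \eqref{eq:tauimplies} gives $D-c_0Q^2/I\ge0$, which is \eqref{eq:lowerthm}.

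\emph{Main obstacle: analytic justification, not algebra.} Two points need care.
\begin{enumerate}
\item \emph{Finiteness of the averages.} We need $I,Q,D$ and $\avg{|J_i|/H^2}$ all finite, so that the expectations may be split term by term. I would use \eqref{eq:Hermite} with conditional Jensen: every $\avg{|\p^\alpha f_t/f_t|^p}$ is bounded by $t^{-p|\alpha|/2}\mathbb E|H_\alpha(Z)|^p<\infty$, whatever $\mu$ is. Since the derivatives of $u$ are polynomials in these ratios, H\"older's inequality bounds every polynomial moment. The bound $H^{-2}\le(32/31)^2\tau^{-4}$ then controls $J_i/H^2$.
\item \emph{The identities of Section~\ref{sec:identities}.} The differentiations in $t$ under the integral, and the discarded total derivatives used there, need the same moment bounds. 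These bounds must hold locally uniformly in $t$, which follows from the same Hermite representation because its constants are continuous in $t$.
\end{enumerate}
With these moments in hand, the cutoff argument \eqref{eq:cutoff} goes through by dominated convergence: the integrand $f_t|J_i|H^{-2}|\p_i\chi_R|$ is bounded by $C R^{-1}f_t|J_i|H^{-2}$, which is integrable, so the boundary term tends to zero. No assumption on $\mu$ enters, which gives the final sentence of the theorem.
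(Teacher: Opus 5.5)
Your proposal is correct and follows the paper's proof essentially step for step. The pointwise identity \eqref{eq:divided}, together with $H\geq\frac{31}{32}\tau^2$ and the Sylvester-certified Gram blocks, gives $F(\tau)\geq0$ once the Hermite/conditional-Jensen moment bounds kill the cutoff boundary term, and the choice $\tau=Q/I$ then yields the bound. Your separate treatment of $I>0$ and of the degenerate case $Q=0$ is a little more careful than the paper, which simply asserts $Q/I>0$ (indeed $Q=0$ would force $u$ to be affine and $f_t$ to be non-normalizable).
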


\begin{proof}
Equation \eqref{eq:divided}, positivity of $H$, and positivity of the Gram matrices give \eqref{eq:tauineq}. The positive-time Gaussian estimates justify the cutoff limit \eqref{eq:divzero}. The quadratic in $\tau$ is minimized at $\tau=Q/I>0$, yielding $D\geq c_0Q^2/I$. Finally use $Q=-I'$ and $D=I''$.
\end{proof}

\section{The explicit upper construction}
\label{sec:upper}

The lower certificate proves that no admissible heat flow can have $II''/(I')^2$ below $0.926764$. To obtain an upper bound on the infimum $c_*$, it is enough to construct one admissible density for which the same ratio is smaller than a specified number.

\subsection{A positive Fourier profile}

We use the fixed rational trigonometric polynomial stored in \texttt{third\_upper\_density.json},
\begin{equation}
 g(x)=1+\sum_{(m,n)\in\mathcal M}\frac{A_{mn}}{10^{15}}
 \cos(k_{mn}\cdot x),
 \label{eq:g}
\end{equation}
with triangular-lattice wavevectors
\begin{equation}
 k_{mn}=\left(m-\frac n2,\frac{\sqrt3}{2}n\right),
 \qquad m^2+n^2-mn\leq400.
 \label{eq:k}
\end{equation}
The set contains $729$ modes. Since
\begin{equation}
 \sum_{(m,n)\in\mathcal M}\left|\frac{A_{mn}}{10^{15}}\right|
 =0.468106047404556,
 \label{eq:l1}
\end{equation}
and $|\cos\theta|\leq1$,
\begin{equation}
 g(x)\geq0.531893952595444>0.
 \label{eq:gpositive}
\end{equation}
Thus $\log g$ and all of its derivatives are nonsingular.

A purely periodic function is not normalizable on $\R^2$, so we introduce a broad Gaussian envelope and define
\begin{equation}
 f_R(x)=Z_R^{-1}g(x)^6\exp\!\left(-\frac{|x|^2}{2R^2}\right),
 \qquad R=10^8.
 \label{eq:fR}
\end{equation}
This is a smooth, positive, normalizable probability density. The large value of $R$ makes the envelope vary extremely slowly compared with the periodic structure.

\subsection{Periodic derivative algebra}

First ignore the Gaussian envelope and write
\begin{equation}
 u=6\log g.
 \label{eq:uper}
\end{equation}
Define
\begin{equation}
 B_{ij}\equiv gg_{ij}-g_i g_j,
 \label{eq:Bij}
\end{equation}
so that
\begin{equation}
 u_{ij}=\frac{6B_{ij}}{g^2}.
 \label{eq:uijB}
\end{equation}
A further derivative gives
\begin{equation}
 C_{ijk}\equiv
 g^2g_{ijk}
 -g(g_{ij}g_k+g_{ik}g_j+g_{jk}g_i)
 +2g_i g_j g_k,
 \label{eq:Cijk}
\end{equation}
with
\begin{equation}
 u_{ijk}=\frac{6C_{ijk}}{g^3}.
 \label{eq:uijkC}
\end{equation}

For the periodic density proportional to $g^6$, the common normalization cancels from the ratio $ID/Q^2$. It is therefore convenient to define unnormalized polynomial integrands
\begin{equation}
 P_I\equiv g^6(-h),
 \qquad
 P_Q\equiv g^6q,
 \qquad
 P_D\equiv g^6d_3.
 \label{eq:Pdefs}
\end{equation}
Using \eqref{eq:uijB}--\eqref{eq:uijkC}, these become
\begin{equation}
 \boxed{P_I=-6g^4 B_{ii},}
 \label{eq:PI}
\end{equation}
\begin{equation}
 \boxed{P_Q=36g^2 B_{ij}B_{ij},}
 \label{eq:PQ}
\end{equation}
and
\begin{equation}
 \boxed{P_D=36\left(C_{ijk}C_{ijk}-12B_{ij}B_{jk}B_{ki}\right).}
 \label{eq:PD}
\end{equation}
The factor $g^6$ is useful precisely because it cancels all denominators generated by derivatives of $\log g$. Since $g$ is a finite Fourier polynomial, $P_I$, $P_Q$, and $P_D$ are finite Fourier polynomials as well.

If $\langle\cdot\rangle_{\rm torus}$ denotes the average over one periodic cell, then for the normalized periodic density
\begin{equation}
 I=\frac{\langle P_I\rangle_{\rm torus}}{\langle g^6\rangle_{\rm torus}},
 \quad
 Q=\frac{\langle P_Q\rangle_{\rm torus}}{\langle g^6\rangle_{\rm torus}},
 \quad
 D=\frac{\langle P_D\rangle_{\rm torus}}{\langle g^6\rangle_{\rm torus}}.
 \label{eq:IQDfromP}
\end{equation}
Hence the common normalization cancels,
\begin{equation}
 \boxed{
 \frac{ID}{Q^2}
 =\frac{\langle P_I\rangle_{\rm torus}\langle P_D\rangle_{\rm torus}}
 {\langle P_Q\rangle_{\rm torus}^2}.}
 \label{eq:RperP}
\end{equation}

A finite Fourier polynomial can be written as $P(x)=\sum_k c_ke^{ik\cdot x}$. Averaging over one cell removes every nonzero mode, so only the zero Fourier coefficient survives. The three moments in \eqref{eq:RperP} can therefore be evaluated exactly by coefficient extraction. The resulting periodic ratio is enclosed by
\begin{equation}
 0.94176665012061487
 <\frac{\langle P_I\rangle_{\rm torus}\langle P_D\rangle_{\rm torus}}
 {\langle P_Q\rangle_{\rm torus}^2}
 <0.94176665012061488.
 \label{eq:Rper}
\end{equation}

\subsection{Gaussian envelope and positive heat time}

For the full density \eqref{eq:fR},
\begin{equation}
 \log f_R=6\log g-\frac{|x|^2}{2R^2}-\log Z_R.
 \label{eq:logfR}
\end{equation}
Writing $\eta\equiv R^{-2}$, the Gaussian term shifts the Hessian according to
\begin{equation}
 u_{ij}\longrightarrow u_{ij}-\eta\delta_{ij}.
 \label{eq:Hshift}
\end{equation}
The corresponding unnormalized integrands are
\begin{equation}
 P_{I,R}=P_I+2\eta g^6,
 \label{eq:PIR}
\end{equation}
\begin{equation}
 P_{Q,R}=P_Q+2\eta P_I+2\eta^2g^6,
 \label{eq:PQR}
\end{equation}
and
\begin{equation}
 P_{D,R}=P_D+6\eta P_Q+6\eta^2P_I+4\eta^3g^6.
 \label{eq:PDR}
\end{equation}
Every nonzero triangular-lattice mode has $|k|^2\geq1$, and averaging against the centered Gaussian envelope multiplies it by at most $e^{-R^2|k|^2/2}\leq e^{-R^2/2}$. With $R=10^8$ this is far below $10^{-80}$. Exact rational interval arithmetic therefore gives
\begin{equation}
 0.94176665012061537
 <\frac{I_RD_R}{Q_R^2}
 <0.94176665012061538.
 \label{eq:Rfull}
\end{equation}

Finally define the positive-time heat flow
\begin{equation}
 f_{R,t}=f_R*\gamma_t
 \label{eq:fRt}
\end{equation}
and
\begin{equation}
 \mathcal R_R(t)=\frac{I_R(t)I_R''(t)}{[I_R'(t)]^2}.
 \label{eq:RRt}
\end{equation}
Strict positivity of $g$, together with the Gaussian envelope, implies sufficient domination to make $I_R(t)$, $I_R'(t)$, and $I_R''(t)$ continuous at $t=0$, with $I_R'(0)\neq0$. Therefore
\begin{equation}
 c_*\leq\liminf_{t\downarrow0}\mathcal R_R(t)
 =\mathcal R_R(0)<0.94176665012062.
 \label{eq:upperfinal}
\end{equation}
Combining \eqref{eq:lowerthm} and \eqref{eq:upperfinal} gives \eqref{eq:maininterval}.

\section{Exact verification and reproducibility}
\label{sec:verification}

The lower certificate is stored in
\begin{center}
\texttt{tensor\_lower\_6\_1\_0\_1\_4\_align\_3\_2\_certificate.json}.
\end{center}
The independent verifier \texttt{verify\_aligned\_tensor\_ldl.py} uses rational arithmetic from the Python standard library. It checks
\begin{enumerate}
\item the denominator formula and square decomposition \eqref{eq:Hsquare};
\item the symmetry and dimensions of all eleven Gram matrices;
\item all $491$ positive leading principal minors;
\item every coefficient of the polynomial identity \eqref{eq:certificate}, with zero residual;
\item the two $987$-monomial weighted-divergence components.
\end{enumerate}
The recorded report has status ``exact full identity and positivity passed.'' The upper construction is generated and checked independently by separate scripts.

The semidefinite-programming (SDP) stage is used only to discover a candidate certificate. The chosen ansatz gives a feasibility problem with $925$ exact coefficient equations and $16{,}045$ independent upper-triangular Gram entries before constraints. The floating-point candidate is projected onto the exact affine equations, selected free coordinates are rationalized, and the remaining linear system is solved exactly over the rationals.

For $c=0.926764$, the accepted projected candidate has minimum Gram-block eigenvalue approximately $2.34\times10^{-8}$ and SDP margin approximately $3.54\times10^{-8}$. At $c=0.9267645$ the numerical margin is about $-7.36\times10^{-9}$, while the $c=0.926765$ run returned ``unknown'' and did not produce a strictly feasible interior point. This indicates that the particular aligned degree-six ansatz is numerically close to its feasibility boundary near $0.926764$, but it is not a rigorous infeasibility result and does not constrain other ansatz families.

The principal reproduction commands are
\begin{verbatim}
python verify_aligned_tensor_ldl.py \\
  tensor_lower_6_1_0_1_4_align_3_2_certificate.json
python certify_third_upper.py
python verify_third_upper.py
\end{verbatim}
The full code contains the exact input data, rational certificate, projected interior point, upper-density data, verification scripts, and computational-environment manifest.

\section{Power convexity and open directions}

For $0<\alpha<1$,
\begin{equation}
 (I^\alpha)''
 =\alpha I^{\alpha-2}
 \left[II''-(1-\alpha)(I')^2\right].
 \label{eq:power}
\end{equation}
Since the prefactor is positive, $I^\alpha$ is convex exactly when
\begin{equation}
 II''\geq(1-\alpha)(I')^2.
 \label{eq:powercondition}
\end{equation}
The sharp power threshold is therefore
\begin{equation}
 \alpha_*\equiv1-c_*.
 \label{eq:alphastar}
\end{equation}
Our interval for $c_*$ gives
\begin{equation}
 \boxed{0.05823334987938<\alpha_*\leq0.073236.}
 \label{eq:alphainterval}
\end{equation}
Thus $I^\alpha$ is universally convex for every $\alpha\geq0.073236$, while the explicit upper construction rules out every universal exponent $\alpha\leq0.05823334987938$.

Several questions remain open. Is $c_*$ equal to one of the two endpoints in \eqref{eq:maininterval}, or to neither? Can the lower certificate be improved by choosing a different positive denominator $H$, enlarging the divergence ansatz $J_i$, or exploiting the $O(2)$ decomposition of the local derivative tensors more systematically? Is there an extremizing sequence that explains the numerical proximity of the lower and upper constructions? Finally, extending the lower theorem beyond positive-time Gaussian convolutions requires a separate analysis of integrability and boundary terms.

\section*{Acknowledgements}
The certificate search and exact checks were carried out with Python, SymPy, python-flint, NumPy, SciPy, and CVXOPT. The numerical stage was used for discovery only; the theorem rests on rational reconstruction and independent exact verification. No priority claim is made beyond the explicitly cited literature. We have used ChatGPT in preparing this paper.

\end{document}